\theoremstyle{plain}\newtheorem{theorem}{Theorem}
\newtheorem{definition}[theorem]{Definition}
\newtheorem{lemma}[theorem]{Lemma}
\newtheorem{corollary}[theorem]{Corollary}
\newcommand{\ZZ}{\mathbb{Z}}
\newcommand{\NN}{\mathbb{N}}
\newcommand{\nspace}[1]{\textrm{\bf NSPACE}(#1)}
\newcommand{\ntime}[1]{\textrm{\bf NTIME}(#1)}
\newcommand{\horiz}{\mathcal H}
\newcommand{\totper}{\mathcal T}
\newcommand{\cne}{\textrm{\textbf{NE}}}
\newcommand{\ccone}{\textrm{\textbf{coNE}}}
\newcommand{\cnp}{\textrm{\textbf{NP}}}
\newcommand{\binaire}[1]{bin(#1)}
\newcommand{\unaire}[1]{un(#1)}
\newcommand{\Lh}[1]{\mathcal L^h_{#1}}
\newcommand{\Lt}[1]{\mathcal L_{#1}}
\begin{document}

\title{Periodicity in tilings}

\author{Emmanuel Jeandel \and Pascal Vanier}





\maketitle
\begin{abstract}
	Tilings and tiling systems are an abstract concept that arise both as a
	computational model and as a dynamical system.
	In this paper, we characterize the sets of periods that a tiling system can produce.
	We prove that up to a slight recoding, they correspond exactly to
	languages in the complexity classes $\nspace{n}$ and $\cne$.

\noindent \emph{Keywords:} Computational and structural complexity, tilings,
dynamical systems.
\end{abstract}

\section*{Introduction}\label{S:one}
The model of tilings was introduced by Wang \cite{WangII} in the 60s to study
decison problems on some classes of logical formulas.
Roughly speaking, we are given some local constraints (a \emph{tiling system}),
and we consider colorings of the plane that respect these constraints (a \emph{tiling}).

This model has been discovered to be quite interesting both as a 
computation model \cite{Berger2} and as a two-dimensional analog 
of symbolic dynamics \cite{LindMarcus, Lind2}. In a sense, the results we
obtained are an answer to a question of symbolic dynamics via a computational point-of-view.

In this paper we consider \emph{periods} of tilings. A coloring of the
plane is said to be periodic of period $p$ if it is invariant by a
translation by a (horizontal/vertical) distance of $p$. 
It is easy to build a tiling system so that there exists a tiling of period $1$
but no tiling of any other period. It is also routine to build a tiling system for
which there is no tiling, hence no periodic tiling. The first important
result on the theory of tilings \cite{Berger2} 
states the existence of a tiling system which admits tilings, none of them being periodic.
Our main result in this article is a characterization of the sets of periods
we can obtain with tilings.

Periods in tilings are fundamental in the context of symbolic dynamics \cite{LindMarcus}.
One of the main purposes of multidimensional symbolic dynamics is to decide whether two
tiling systems generate rougly the same tilings or, more accurately, if the two
\emph{subshifts} they generate are conjugated. One of the main tools is
conjugacy invariants, i.e. quantities that are preserved under conjugacy.
The set of periods is such an invariant. One other well known example is the
\emph{entropy} which is a measure of the growth of the number of $n \times
n$ squares that can appears inside a tiling.

It turns out that a characterization of entropy in tilings, itself a
combinatorial quantity, is intimately linked with computational theory, more
precisely recursivity: entropy of tilings are exactly the (non-negative) right recursively
enumerable numbers \cite{HochMey}. This is merely an exception: most
conjugacy invariants are to be expressed in terms of recursivity
theory~\cite{AubrunS09,Meyero}.

We prove a similar theorem for periods in tilings, answering a open problem
in~\cite{Meyero}. For periodic tilings, which are truly \emph{finite
objects}, the good analog will not be \emph{recursivity} but
rather \emph{complexity}. We will see indeed in this article that sets of
tilings correspond to complexity classes, in particular to non-deterministic
exponential space and non-deterministic exponential time. These classes
are not unusual for a specialist in descriptive complexity \cite{EF:finmt}.
This is not a coincidence; the reader familiar with this topic may indeed
see the results in this article as an analog for tilings of the classical
result of Jones and Selman \cite{JonesSelman} on spectra of first order
formulas.

\section{Tilings, periodicity and computations}\label{S:Tpc}

Usually when considering tiling systems, Wang rules are used. Wang rules consider adjacent tiles only, whereas our set
of rules will consider a neighborhood of tiles.
This makes 
constructions easier to explain, while keeping almost the same properties as classical Wang 
tiling systems.

For any dimension $d\geq 1$, a tiling of $\ZZ^d$ with a finite set of tiles $T$ is a mapping $c:\ZZ^d\to T$. 
A tiling system is the pair $(T,I)$, where $I$ is a finite set of forbidden 
patterns $I\subset T^N$, where $N\subset_{finite} \ZZ^d$ is the neighborhood.
A tiling $c$ is said \emph{valid} if and only if none of the patterns of $I$ ever appear in $c$. 
Since the number of forbidden patterns is finite, we could specify the rules by \emph{allowed}
patterns as well. We give an example of such a tiling system with the tiles of figure~\ref{ex1_tuiles}a 
and the forbidden patterns of figure~\ref{ex1_tuiles}b. The allowed tilings are showed in figure~\ref{ex1_pavages}.

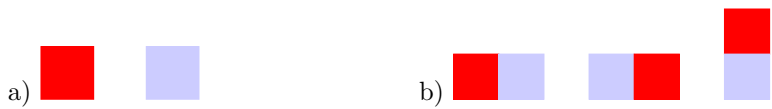
\begin{figure}[htbp]
 \begin{center}
 a)~\begin{tikzpicture}[scale=0.7]
 \filldraw[color=red] (0,0) rectangle (1,1);
 \filldraw[color=blue!20] (2,0) rectangle (3,1);
 \filldraw[color=white] (4,0) rectangle (5,1);
\end{tikzpicture}
~~~~~~~~~~~~b)~\begin{tikzpicture}[scale=0.6]
 \filldraw[color=red] (0,0) rectangle (1,1);
 \filldraw[color=blue!20] (1,0) rectangle (2,1);
 \filldraw[color=blue!20] (3,0) rectangle (4,1);
 \filldraw[color=red] (4,0) rectangle (5,1);
 \filldraw[color=red] (6,1) rectangle (7,2);
 \filldraw[color=blue!20] (6,0) rectangle (7,1);
\end{tikzpicture}
 \caption{The set of tiles (a) and the forbidden patterns (b).}\label{ex1_tuiles}
 \end{center}
\end{figure}

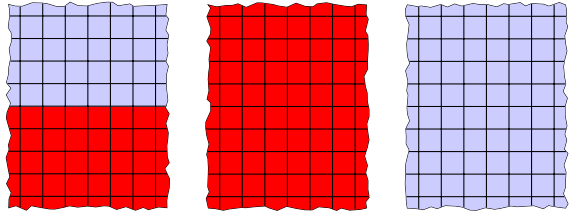
\begin{figure}[htbp]
 \begin{center}
 \begin{tikzpicture}[scale=0.3]
 \clip[draw,decorate,decoration={random steps, segment length=3pt, amplitude=1pt}] (0.5,0.5) rectangle (7.5,9.5);
 \foreach \x in {0,...,7}
  \foreach \y in {0,...,4}
   \filldraw[color=red] (\x,\y) rectangle (\x+1,\y+1);
 \foreach \x in {0,...,7}
  \foreach \y in {5,...,9}
   \filldraw[color=blue!20] (\x,\y) rectangle (\x+1,\y+1);
 \draw (0,0) grid (8,10);
\end{tikzpicture}
~~
\begin{tikzpicture}[scale=0.3]
 \clip[draw,decorate,decoration={random steps, segment length=3pt, amplitude=1pt}] (10+0.5,0.5) rectangle (10+7.5,9.5);
 \foreach \x in {0,...,7}
  \foreach \y in {0,...,9}
   \filldraw[color=red] (10+\x,\y) rectangle (10+\x+1,\y+1);
 \draw (10+0,0) grid (10+8,10);
\end{tikzpicture}
~~
\begin{tikzpicture}[scale=0.3]
 \clip[draw,decorate,decoration={random steps, segment length=3pt, amplitude=1pt}] (10+0.5,0.5) rectangle (10+7.5,9.5);
 \foreach \x in {0,...,7}
  \foreach \y in {0,...,9}
   \filldraw[color=blue!20] (10+\x,\y) rectangle (10+\x+1,\y+1);
 \draw (10+0,0) grid (10+8,10);
\end{tikzpicture}
 \caption{The only valid tilings of the system.}\label{ex1_pavages}
 \end{center}
\end{figure}

A tiling $c$ of dimension $d=2$ is said to be \emph{horizontaly periodic} if and only if there exists a 
\emph{period} $p\in\NN^*$ such that for all $x,y\in\ZZ$, $c(x,y)=c(x+p,y)$. A tiling $c$ of $\ZZ^d$ is 
\emph{periodic} if it has the same period on all its dimensions: 
$$c(x_1,x_2,\dots,x_d)  =  c(x_1+p,x_2,\dots,x_d) = c(x_1,x_2,\dots,x_d+p)$$

A tiling system is \emph{aperiodic} if and only if it tiles the plane but 
there is no valid periodic\footnote{On none of the dimensions.} tiling. Such tiling systems have been shown to exist with Wang rules, and by 
extension for our set of rules. J. Kari and P. Papasoglu gave such a tiling system with an 
interesting property: determinism.  A tiling system is \emph{NW-deterministic} if given two
diagonally adjacent tiles, there is only one way to complete the square in order to form a valid 
tiling. This means the two tiles \emph{force} the next one, the mechanism is shown in figure~\ref{deter}.

\begin{figure}[htbp]
 \begin{center}
 \begin{tikzpicture}[scale=0.5]
 \fill[color=green] (0,1) rectangle (1,2);
 \fill[color=green] (1,0) rectangle (2,1);
 \node[right] (l) at (2,1) {};
 \node[left] (r) at (5,1) {};
 \path[-latex] (l) edge (r);
 \fill[color=green] (5,1) rectangle (6,2);
 \fill[color=green] (6,0) rectangle (7,1);
 \fill[color=blue] (6,1) rectangle (7,2);
\end{tikzpicture}
 \caption{Determinism in the Kari-Papasoglu aperiodic tiling system.}\label{deter}
 \end{center}
\end{figure}
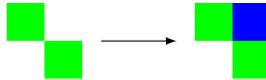

Since the tiling system was described with Wang rules, it is easy to transpose it to our set of rules and
change some details: instead of having the tiles to be adjacent diagonally, we can have them 
simply adjacent, as in figure~\ref{Edeter}. The type of determinism of this example will be called 
East-determinism. With such a tiling system, if a column of the plane is given, the half plane on its right is
then determined.

\begin{figure}[htbp]
 \begin{center}
 \begin{tikzpicture}[scale=0.5]
 \fill[color=red] (0,0) rectangle (1,1);
 \fill[color=green] (0,1) rectangle (1,2);
 \node[right] (l) at (1,1) {};
 \node[left] (r) at (4,1) {};
 \path[-latex] (l) edge (r);
 \fill[color=red] (4,0) rectangle (5,1);
 \fill[color=blue] (5,0) rectangle (6,1);
 \fill[color=green] (4,1) rectangle (5,2);
\end{tikzpicture}
 \caption{East-determinism in a tiling system.}\label{Edeter}
 \end{center}
\end{figure}
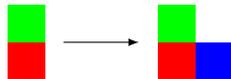

\section{Turing machines}

As said earlier, tilings and recursivity are intimately linked.
In fact, it is quite easy to encode Turing machines in tilings. We give in
this section such an encoding. Similar encodings can be found e.g. in \cite{KariRevCA,Chaitin08}.

\begin{figure}[htbp]
 \begin{center}
 \scalebox{0.8}{
\begin{tikzpicture}[auto,scale=0.6]
\filldraw[fill=blue!20] (0,0) rectangle (4,4);
\node[draw,circle] (etat) at (1,2) {$s$};
\draw[-latex] (1,0) node[below] (basetat) {$s$} -- (etat);
\draw (etat) -- (1,3) -- (0,3);
\node[draw,rectangle,fill=white] (ruban) at (2,2) {$a$};
\draw[-latex] (2,1) -- (ruban);
\node[left] at (0,3) (hautetat) {$s'$};
\node[above] at (2,4) (hautruban) {$a'$};
\node[below] at (2,0) (basruban) {$a$};
\begin{scope}[shift={(6,0)}]
\filldraw[fill=blue!20] (0,0) rectangle (4,4);
\node[draw,circle] (etat) at (1,2) {$s$};
\draw[-latex] (1,0) node[below] (basetat) {$s$} -- (etat);
\draw (etat) -- (1,4);
\node[draw,rectangle,fill=white] (ruban) at (2,2) {$a$};
\draw[-latex] (2,1) -- (ruban);
\node[above] at (1,4) (hautetat) {$s'$};
\node[above] at (2,4) (hautruban) {$a'$};
\node[below] at (2,0) (basruban) {$a$};
\end{scope}
\begin{scope}[shift={(12,0)}]
\filldraw[fill=blue!20] (0,0) rectangle (4,4);
\node[draw,circle] (etat) at (1,2) {$s$};
\draw[-latex] (1,0) node[below] (basetat) {$s$} -- (etat);
\draw (etat) -- (1,3) -- (4,3);
\node[draw,rectangle,fill=white] (ruban) at (2,2) {$a$};
\draw[-latex] (2,1) -- (ruban);
\node[right] at (4,3) (hautetat) {$s'$};
\node[above] at (2,4) (hautruban) {$a'$};
\node[below] at (2,0) (basruban) {$a$};
\end{scope}
\begin{scope}[shift={(18,0)}]
\filldraw[fill=blue!20] (0,0) rectangle (4,4);
\draw  (0,3) -- (1,3) -- (1,4);
\node[draw,rectangle,fill=white] (ruban) at (2,2) {$a$};
\node[left] at (0,3)  (basetat) {$s$};
\node[above] at (1,4) (hautetat) {$s$};
\node[above] at (2,4) (hautruban) {$a$};
\node[below] at (2,0) (basruban) {$a$};
\end{scope}
\begin{scope}[shift={(0,-6)}]
\filldraw[fill=blue!20] (0,0) rectangle (4,4);
\draw  (4,3) -- (1,3) -- (1,4);
\node[draw,rectangle,fill=white] (ruban) at (2,2) {$a$};
\node[right] at (4,3)  (basetat) {$s$};
\node[above] at (1,4) (hautetat) {$s$};
\node[above] at (2,4) (hautruban) {$a$};
\node[below] at (2,0) (basruban) {$a$};
\end{scope}
\begin{scope}[shift={(6,-6)}]
\filldraw[fill=blue!20] (0,0) rectangle (4,4);
\node[draw,rectangle,fill=white] (ruban) at (2,2) {$a$};
\node[above] at (2,4) (hautruban) {$a$};
\node[below] at (2,0) (basruban) {$a$};
\end{scope}
\begin{scope}[shift={(12,-6)}]
\fill[color=blue!20] (0,1) rectangle (4,4);
\draw (0,0) rectangle (4,4);
\node[draw,circle] (etat) at (1,2) {$s_0$};
\draw (etat) -- (1,3) -- (4,3);
\draw[-latex] (0,2) -- (etat);
\node[draw,rectangle,fill=white] (ruban) at (2,2) {$a$};
\draw[-latex] (2,1) -- (ruban);
\node[left] at (0,2) (basetat) {$s_0$};
\node[right] at (4,3) (hautetat) {$s'$};
\node[above] at (2,4) (hautruban) {$a'$};
\end{scope}
\begin{scope}[shift={(18,-6)}]
\fill[color=blue!20] (0,1) rectangle (4,4);
\draw (0,0) rectangle (4,4);
\node[draw,circle] (etat) at (1,2) {$s_0$};
\draw (etat) -- (1,4);
\draw[-latex] (0,2) -- (etat);
\node[draw,rectangle,fill=white] (ruban) at (2,2) {$a$};
\draw[-latex] (2,1) -- (ruban);
\node[above] at (1,4) (hautetat) {$s'$};
\node[above] at (2,4) (hautruban) {$a'$};
\node[left] at (0,2) (basetat) {$s_0$};
\end{scope}
\begin{scope}[shift={(0,-12)}]
\fill[color=blue!20] (0,1) rectangle (4,4);
\draw (0,0) rectangle (4,4);
\draw (0,3) -- (1,3) -- (1,4);
\node[draw,rectangle,fill=white] (ruban) at (2,2) {$a$};
\node[above] at (1,4) (hautetat) {$s$};
\node[left] at (0,3) (basetat) {$s$};
\node[above] at (2,4) (hautruban) {$a$};
\end{scope}
\begin{scope}[shift={(6,-12)}]
\fill[color=blue!20] (0,1) rectangle (4,4);
\draw (0,0) rectangle (4,4);
\node[draw,rectangle,fill=white] (ruban) at (2,2) {$a$};
\node[above] at (2,4) (hautruban) {$a$};
\end{scope}
\begin{scope}[shift={(12,-12)}]
\fill[color=blue!20] (0,1) rectangle (3,4);
\draw (0,0) rectangle (4,4);
\end{scope}
\begin{scope}[shift={(18,-12)}]
\fill[color=blue!20] (4,1) rectangle (1,4);
\draw (0,0) rectangle (4,4);
\draw (2,2) -- (4,2);
\node[right] at (4,2) (basetat) {$s_0$};
\end{scope}
\begin{scope}[shift={(0,-18)}]
\fill[color=blue!20] (0,0) rectangle (3,4);
\draw (0,0) rectangle (4,4);
\end{scope}
\begin{scope}[shift={(6,-18)}]
\fill[color=blue!20] (4,0) rectangle (1,4);
\draw (0,0) rectangle (4,4);
\end{scope}
\begin{scope}[shift={(12,-18)}]
\fill[color=blue!20] (0,3) rectangle (4,0);
\draw (0,0) rectangle (4,4);
\node[draw,circle] (etat) at (1,2) {$h$};
\draw[-latex] (1,0) node[below] (basetat) {$h$} -- (etat);
\node[draw,rectangle,fill=white] (ruban) at (2,2) {$a$};
\draw[-latex] (2,1) -- (ruban);
\node[below] at (2,0) {$a$};
\end{scope}
\begin{scope}[shift={(18,-18)}]
\fill[color=blue!20] (0,0) rectangle (4,3);
\draw (0,0) rectangle (4,4);
\node[draw,rectangle,fill=white] (ruban) at (2,2) {$a$};
\node[below] at (2,0) (basruban) {$a$};
\end{scope}
\begin{scope}[shift={(0,-24)}]
\fill[color=blue!20] (1,0) rectangle (4,3);
\draw (0,0) rectangle (4,4);
\end{scope}
\begin{scope}[shift={(6,-24)}]
\fill[color=blue!20] (3,0) rectangle (0,3);
\draw (0,0) rectangle (4,4);
\end{scope}
\begin{scope}[shift={(18,-24)}]
\filldraw[fill=blue!20] (0,0) rectangle (4,4);
\node[draw,circle] (etat) at (1,2) {$h$};
\draw[-latex] (1,0) node[below] (basetat) {$h$} -- (etat);
\draw (etat) -- (1,4);
\node[draw,rectangle,fill=white] (ruban) at (2,2) {$a$};
\draw[-latex] (2,1) -- (ruban);
\node[above] at (1,4) (hautetat) {$h$};
\node[above] at (2,4) (hautruban) {$a$};
\node[below] at (2,0) (basruban) {$a$};
\end{scope}
\end{tikzpicture}
}
 \caption{A tiling system, given by Wang tiles, simulating a Turing machine}\label{mt_tuiles}
 \end{center}
\end{figure}
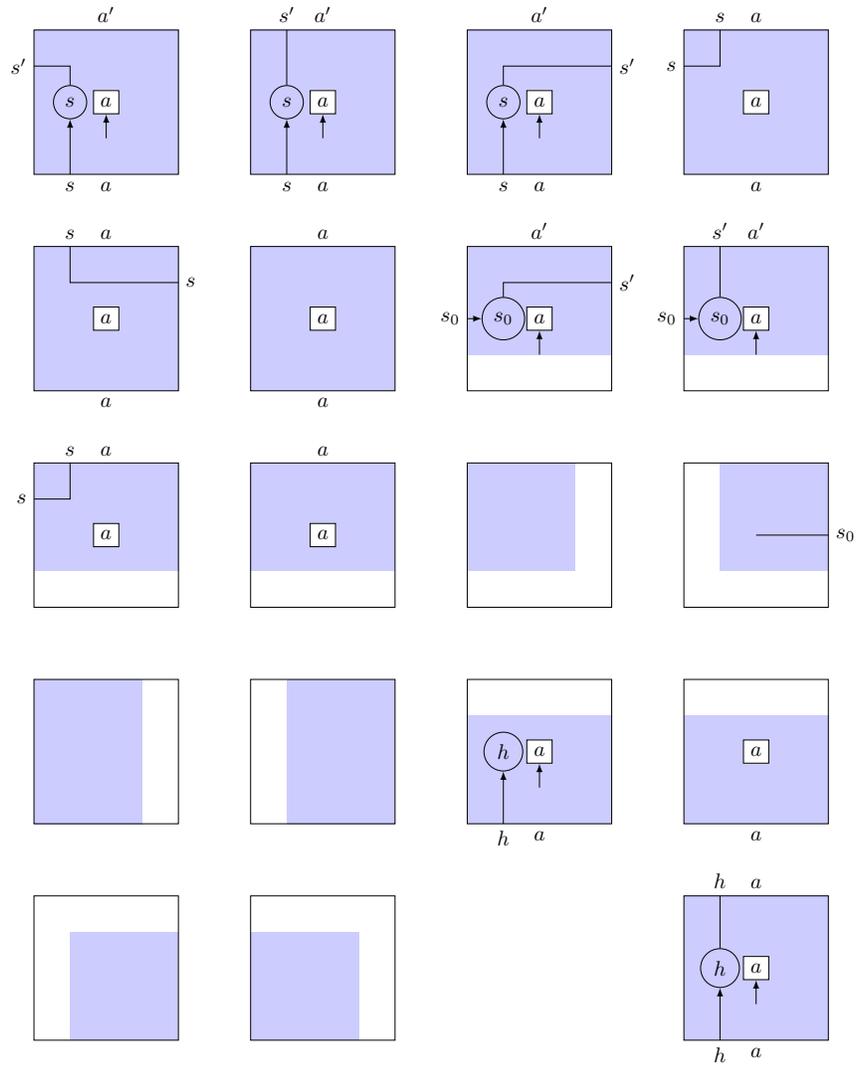

Given a (non-deterministic, with one semi-infinite tape) Turing machine $M$, we build a tiling system $\tau_M$
in figure~\ref{mt_tuiles}. The tiling system is given by Wang tiles, i.e., we can
only glue two tiles together if they coincide on their common edge.
We now give some details on the picture:
\begin{itemize}
\item $s_0$ in the tiles is the initial state of the Turing machine.
\item The first tile corresponds to the case where the Turing machine, given the state $s$ and the letter $a$
chose to go to the left and to change from $s$ to $s'$, writing $a'$. The two
other tiles are similar.
\item $h$ represents a halting state. Note that the only states that can
  appear in the last step of a computation (before a border appears)
are halting states.
\end{itemize}

This tiling system $\tau_M$ has the following property: there is an accepting path
for the word $u$ in time (less than) $w$ using space (less than) $h$ if and only if we can tile a
rectangle of size $(w+2) \times h$  with white borders, the  first row containing the input.

Note that this method works for both deterministic and non-deterministic machines.
Using some usual tiling techniques, it is quite straightforward to build a
tiling system encoding computation paths of \emph{$n$-tape} Turing machine.
As a consequence, complexity results stated below are also valid when the
complexity classes are defined over $n$-tape Turing machines rather than
one-tape Turing Machines.

\section{Recognizing languages with tilings}

We introduce here a way to recognize languages over $\NN$, by looking at the set of periods of 
a given tiling system $\tau$.

We define $\Lh{\tau}$, the language recognized by a tiling system $\tau$, as the set of integers $p$ such
that there is a valid tiling of $\ZZ^2$ by $\tau$ of horizontal \emph{eigenperiod} $p$. 
We mean by \emph{eigenperiod}
that $p$ is the smallest non-zero period of the tiling, in order to avoid also having all multiples of 
any recognized number in the langage.
We similarly note $\Lt{\tau}$ the set of eigenperiods of $\tau$.

\begin{definition}
$\totper$ is the set of all languages $\Lt{\tau}$, where $\tau$ is any
tiling system.
$\horiz$ is the set of all languages $\Lh{\tau}$, where $\tau$ is any tiling
system.
$\totper$ and $\horiz$ are the classes of languages recognized by tiling and recognized horizontally 
by tiling respectively.
\end{definition}

We say that a language $\mathcal L$ is recognized by a tiling system $\tau$ if and only if
$\mathcal L=\Lh{\tau}$ or $\mathcal L=\Lt{\tau}$, depending on the context. 

A natural question that arises when studying languages concerns closure by union, intersection 
and complementation. Closure by union is easy to prove: take the disjoint union of the sets of tiles 
and forbid tiles from the first tiling system to be neighbors of tiles 
from the second one. 
Closure by intersection is not as straightforward, since the
classical construction by cartesian product of the two tiling systems would lead to a language 
containing the \emph{lcm} of the periods of each language. 

In this paper, we prove the following:
\begin{theorem} 
 $\horiz$ is closed by union, intersection and complementation.
\end{theorem}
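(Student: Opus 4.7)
The plan decomposes the theorem into three lemmas, one per operation, in increasing order of difficulty.

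For union, I would use the construction sketched in the text. Given $\tau_1 = (T_1, I_1)$ and $\tau_2 = (T_2, I_2)$, form $\tau_\cup$ with tile set $T_1 \sqcup T_2$ and forbidden patterns $I_1 \cup I_2$, together with every neighborhood pattern that contains both a $T_1$-tile and a $T_2$-tile. Since $\ZZ^2$ is connected under the neighborhood relation, any valid $\tau_\cup$-tiling uses tiles from exactly one $T_i$, so its horizontal eigenperiod is inherited from the corresponding $\tau_i$-tiling, and $\Lh{\tau_\cup} = \Lh{\tau_1} \cup \Lh{\tau_2}$.

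For intersection, the Cartesian product is insufficient because the eigenperiod of the product is the lcm of the two component eigenperiods rather than their common value. My plan is first to modify each system so that its horizontal eigenperiod is explicitly marked in the tiling: using the East-deterministic Kari--Papasoglu substrate as an anchor, I add a row-level marker layer that places a distinguished symbol exactly at positions $x \equiv 0 \pmod{q}$, where $q$ is the true horizontal eigenperiod of the underlying tiling. The product of two such augmented systems, with the two marker layers required to coincide, then admits a tiling of horizontal eigenperiod exactly $p$ iff $p$ belongs to both $\Lh{\tau_1}$ and $\Lh{\tau_2}$.

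For complementation, the main obstacle, I would route the argument through the complexity-theoretic characterization announced in the introduction. Under the expected identification $\horiz = \nspace{n}$, up to the standard unary recoding of periods, closure under complement follows from the Immerman--Szelepcs\'enyi theorem. The more direct alternative would embed, inside every candidate tiling of eigenperiod $p$, a certificate carried out by a Turing machine simulated in the fundamental strip of width $p$, which nondeterministically refutes the existence of any valid $\tau$-tiling of eigenperiod $p$; the available space is precisely $O(p)$, which is exactly the regime in which the inductive counting argument of Immerman--Szelepcs\'enyi goes through. This complementation step is where the genuine technical content of the theorem lies; union and intersection are comparatively combinatorial, and I would expect the bulk of the proof to consist of making the period-marker construction for intersection and the space-bounded refutation construction for complement rigorous at the level of tiling systems rather than Turing machines.
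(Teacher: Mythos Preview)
Your treatment of union and of complementation matches the paper. Union is the disjoint-union construction you describe (this is exactly the remark made just before the theorem is stated), and complementation is obtained precisely as you propose: the paper first establishes the characterization $\mathcal L\in\horiz \iff \unaire{\mathcal L}\in\nspace{n}$ (Theorem~\ref{horizspace}) and then invokes Immerman--Szelepcs\'enyi. Your alternative ``direct'' route for complementation, embedding inductive counting into the fundamental strip, would amount to reproving Immerman--Szelepcs\'enyi inside the tiling formalism; the paper in fact remarks that any proof of this closure must do so.

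Where you diverge is intersection, and there your plan has a genuine gap. The paper does \emph{not} give a direct tiling construction for intersection: it simply observes that $\nspace{n}$ is closed under intersection and pulls this back through Theorem~\ref{horizspace}. In your marker-layer construction, local rules together with the East-deterministic aperiodic substrate can force the markers to be equispaced at some distance $q$ and can force the $\tau_i$-layer to be $q$-periodic, but they cannot force $q$ to be the \emph{eigenperiod} of that layer. If $c_i$ is a valid $\tau_i$-tiling of eigenperiod $p_i$, then for every multiple $q$ of $p_i$ one obtains a valid augmented tiling whose eigenperiod is $q$ (the aperiodic substrate between markers makes $q$ the eigenperiod of the whole). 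Hence the augmented system recognizes the set of multiples of elements of $\Lh{\tau_i}$, not $\Lh{\tau_i}$ itself, and synchronizing the two marker layers in the product yields common multiples rather than $\Lh{\tau_1}\cap\Lh{\tau_2}$. Forcing $q$ to equal the eigenperiod would require certifying, inside the tiling, that the $\tau_i$-layer admits no strictly smaller horizontal period; this is a co-nondeterministic condition in space $O(q)$, which sends you straight back to Immerman--Szelepcs\'enyi, i.e.\ to the paper's complexity-theoretic route.
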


\begin{theorem} \label{totperne}
$\totper$ is closed by intersection. $\totper$ is closed by complementation if and only if
\mbox{$\cne = \ccone$}.
\end{theorem}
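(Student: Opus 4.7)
The plan is to handle the two parts of the theorem separately.

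For closure under intersection, the naive construction --- the cartesian product tiling system $\tau_1 \times \tau_2$ --- is inadequate: a periodic tiling of the product with eigenperiod $p$ only tells us that the two projected tilings have eigenperiods whose lcm equals $p$, which can happen even when neither projection's eigenperiod is $p$. My plan is to superimpose on $\tau_1 \times \tau_2$ an auxiliary layer that enforces equality. Using periodicity markers together with the Turing-machine tile encoding of Section 2, I would arrange that inside each $p \times p$ fundamental domain of a candidate tiling, a machine runs which, for each proper divisor $q$ of $p$ and each projection $i \in \{1,2\}$, looks for a position at which shifting the $i$th projection by $q$ (horizontally or vertically) produces a mismatch. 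The $p^2$ cells of workspace are far more than the $p \cdot \mathrm{poly}(\log p)$ resources required to enumerate candidate divisors and scan for mismatches, so the simulation fits easily. A tiling of the resulting system then has eigenperiod $p$ precisely when each projection does, realizing $\Lt{\tau_1} \cap \Lt{\tau_2}$.

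For closure under complementation, the plan is to invoke the characterization --- developed elsewhere in the paper --- that $\totper$ coincides, under the natural binary encoding of integers, with $\cne$. Granted this equivalence, the biconditional is immediate. If $\cne = \ccone$, then for any tiling system $\tau$ the language $\overline{\Lt{\tau}}$ lies in $\ccone = \cne$ and hence in $\totper$. Conversely, if $\totper$ is closed under complement, then any $L \in \cne$ can be written $L = \Lt{\tau}$, giving $\overline L = \Lt{\tau'} \in \totper \subseteq \cne$; hence $\cne \subseteq \ccone$, and complementing once more yields $\cne = \ccone$.

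The step I expect to be the main obstacle is the verifier in the intersection construction. Certifying that $p$ is the \emph{minimum} common period is a ``for all proper divisors $q$'' statement; it can be checked by the in-block machine since its $p \times p$ workspace is polynomial in the parameter of interest, but the bookkeeping is delicate. One has to ensure that the periodicity markers on the new layer agree with the actual eigenperiods of the two product layers --- neither permitting spurious larger periods nor allowing the verifier to be short-circuited --- and that the Turing-machine encoding occupies the plane without clashing with the $\tau_1 \times \tau_2$ constraints.
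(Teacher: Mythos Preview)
Your treatment of complementation is exactly the paper's: once Theorem~\ref{totpercne} identifies $\totper$ (in binary) with $\cne$, closure under complement is equivalent to $\cne=\ccone$.

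For intersection the paper takes the \emph{same} route --- $\cne$ is closed under intersection, hence so is $\totper$ --- whereas you attempt a direct tiling construction. That construction has a genuine gap. The Turing-machine encoding of figure~\ref{mt_tuiles} does not give the machine ``$p^2$ cells of workspace'' inside a $p\times p$ block: the block is the space--time diagram of the run, so the machine has $O(p)$ tape cells and $O(p)$ steps, not $p^2$ freely addressable cells. With those resources it is not clear how your verifier can, for \emph{every} proper divisor $q$ of $p$, locate a mismatch in each projection; comparing $c_i(x,y)$ with $c_i(x+q,y)$ already asks the head to read two positions of the same row at the same time step, and comparing $c_i(x,y)$ with $c_i(x,y+q)$ asks it to store a full row for $q$ steps while simultaneously iterating over all $q$. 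One can try to rescue this by increasing the dimension and folding, and by wiring the $\tau_i$-layers into the machine's input alphabet, but at that point you are essentially redoing the proof of Theorem~\ref{totpercne}. There is also a dimension issue you do not address: $\totper$ ranges over all $d$, so $\tau_1$ and $\tau_2$ need not live on the same $\ZZ^d$, and the product $\tau_1\times\tau_2$ is not even well-formed as stated. Since you already grant yourself the characterization for the complementation half, the cleanest repair is to invoke it for intersection as well.
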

Here $\cne$ is the class of languages recognized by a (one-tape) non-deterministic
Turing machine in time $2^{cn}$ for some $c > 0$. 
Note that for theorem
\ref{totperne} we need to work in any dimension $d$. That is, if
$\mathcal{L}_1, \mathcal{L}_2$ are the set of periods of the 
tiling system $\tau_1, \tau_2$, then there exists a
tiling system $\tau'$ that corresponds to the set of period $\mathcal{L}_1 \cap
 \mathcal{L}_2$. However even if $\tau_1$ and $\tau_2$ are two-dimensional
tiling systems, $\tau'$ might be in dimension greater than $2$.

To prove these theorems, we will actually give a characterization in terms
of structural complexity of our classes $\horiz$ and $\totper$.
This will be the purpose of the next two sections.
For an introduction to structural complexity, we suggest \cite{BDG,BDG2}.

\section{$\horiz$ and $\nspace{2^n}$}

To formulate our theorem, we consider sets of periods, i.e. subsets of
$\mathbb{N}^\star$, as unary or binary languages.
If $L \subset \mathbb{N}^\star$ then we define
$\unaire{L} = \{ 1^{n-1} | n \in L\}$. We define $\binaire{L}$ to be the
set of binary representations (missing the leading one) of numbers of $L$.
As an example, if $L = \{1, 4, 9 \}$, then
$\unaire{L} = \{ \epsilon, 111, 11111111\}$ and
$\binaire{L} = \{ \epsilon, 00, 001\}$.
Note that any language over the letter $1$ (resp. the letters $\{0,1\}$)
is the unary (resp. binary) representation of some subset of
$\mathbb{N}^\star$.

We now proceed to the statement of the theorem:
\begin{theorem}\label{horizspace}
Let $\mathcal L$ be a language, the following statements are equivalent:
\begin{enumerate}[i)]
 \item $\mathcal L\in\horiz$
 \item $\unaire{\mathcal L}\in \nspace{n}$
 \item $\binaire{\mathcal L}\in \nspace{2^n}$
\end{enumerate}
\end{theorem}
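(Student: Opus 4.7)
The plan is to prove the triangle of implications. The equivalence \emph{(ii)} $\Leftrightarrow$ \emph{(iii)} is a standard padding argument: from $1^{p-1}$ one computes $p$ in $O(\log p)$ space and then simulates the binary algorithm in $O(2^{\log p}) = O(p)$ space, while conversely, from $\binaire{p}$ one generates the unary cells on the fly, so $O(p)$ space on a unary input of length $p-1$ matches $O(2^n)$ space on a binary input of length $n \approx \log p$. So I focus on \emph{(i)} $\Leftrightarrow$ \emph{(ii)}.

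For \emph{(i)} $\Rightarrow$ \emph{(ii)}, fix a tiling system $\tau$ with vertical neighborhood radius $R$. On input $1^{p-1}$, the algorithm nondeterministically builds a valid tiling of the cylinder $\ZZ/p\ZZ \times \ZZ$: it first guesses a starting block of $R$ consecutive rows of width $p$ (stored in $O(Rp)=O(p)$ space), then extends upward one row at a time, keeping only the starting block and the most recent $R$ rows in memory, and checking local validity at each step. It accepts when it guesses that the current top $R$ rows match the starting block, which by periodic gluing yields a valid doubly-infinite tiling. For the eigenperiod check, the set of horizontal periods of any tiling forms a subgroup of $\ZZ$, so the eigenperiod equals $p$ iff $p$ is a period and no $p/q$ is, for every prime $q \mid p$. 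The algorithm maintains one flag per such prime, set the first time a row is guessed that is not $(p/q)$-periodic, and accepts only once all flags are set. Since $p$ has $O(\log p)$ distinct prime divisors, total space is $O(p)=O(n)$.

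For \emph{(ii)} $\Rightarrow$ \emph{(i)}, start from an $\nspace{n}$ machine $M$ recognizing $\unaire{\mathcal{L}}$. I assume, by a standard time-counter construction, that $M$ always halts and works within exactly $p-1$ tape cells on input $1^{p-1}$. Using the Turing machine encoding of section 3, I build $\tau$ so that rows of width $p$ encode configurations of $M$ and vertical transitions encode single steps. A vertically invariant \emph{marker} track carries a distinguished symbol at a single position per period; this forces the eigenperiod to be exactly $p$, because any proper divisor $d \mid p$ being a period would require the marker at every position $\equiv 0 \pmod d$. Rows are typed \emph{initial}, \emph{compute}, \emph{accept}, \emph{reject}: accept rows must be succeeded by initial rows (closing a vertical cycle), and no row may sit above a reject row. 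Stacking accepting runs of $M$ on $1^{p-1}$ vertically then gives valid tilings whenever $M$ accepts, yielding $\mathcal{L} \subseteq \Lh{\tau}$. The main obstacle is the converse $\Lh{\tau} \subseteq \mathcal{L}$: one must show that any valid tiling actually contains an initial row and therefore encodes an accepting computation. Since $M$ always halts, any upward chain of compute rows must reach a halt state in finitely many rows; reject rows being forbidden above and accept rows requiring initial rows above then force initial rows to occur in any valid tiling, so the marker is truly present and each initial row sits at the bottom of an accepting run of $M$ on $1^{p-1}$.
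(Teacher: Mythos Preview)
Your argument for $(i)\Rightarrow(ii)$ is essentially the paper's: guess a horizontally $p$-periodic tiling one row at a time, keeping only a sliding window of $R$ rows together with the starting window, and accept on a vertical repeat. Your eigenperiod test via the prime divisors of $p$ is a harmless refinement of the paper's test over all $d<p$.

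For $(ii)\Rightarrow(i)$ your route diverges from the paper's and contains a genuine gap. The paper superimposes an \emph{aperiodic} E-deterministic tile set with an extra ``gray'' tile: since the aperiodic layer cannot tile periodically by itself, every horizontally periodic valid tiling is \emph{forced} to contain gray columns, and a synchronization layer exploiting E-determinism then pins the eigenperiod to the gray spacing before the Turing-machine layer is run inside the resulting rectangles. Your construction replaces this mechanism by a marker track plus a halting argument, but nothing in your local rules forces a marker---or a head---to appear at all. Concretely, consider the tiling whose marker bit is $0$ everywhere and whose computation track is a fixed head-free word $w$ of length $p$ repeated on every row; each cell just copies its tape symbol upward, all local constraints are met, and by choosing $w$ of least period $p$ you obtain eigenperiod $p$ for every $p$, so $\Lh{\tau}=\NN^\star$ independently of $L$. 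Your sentence ``since $M$ always halts, any upward chain of compute rows must reach a halt state'' does not bite here: the time counter you invoke is part of $M$'s own operation and only advances when a head acts, so head-free rows can be stacked forever without ever reaching a halting type. The idea you are missing is precisely the role of the aperiodic component---it is what rules out these degenerate tilings and forces the intended structure to appear in \emph{every} periodic tiling, which is the entire difficulty in this direction.
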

Recall that $\nspace{n}$ is the set of languages recognized by a
(one-tape) non-deterministic Turing machine in space $O(n)$.

The $(ii)\Leftrightarrow (iii)$ is folklore from computational complexity
theory. The following two 
lemmas will prove the equivalence $(ii) \Leftrightarrow (i)$ hence the result.

\begin{lemma}\label{lemmedirect}
For any tiling system $\tau$, $\unaire{\Lh{\tau}}\in \nspace{n}$.
\end{lemma}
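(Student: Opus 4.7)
I would prove the lemma by showing how to decide, given the input $1^{p-1}$, whether $p \in \Lh{\tau}$ using non-deterministic space $O(p)$. The strategy is to reduce the question to a reachability problem in a finite graph $G_p$, whose vertices encode windows of $h-1$ consecutive rows of length $p$ (with $h$ the vertical extent of the neighborhood of $\tau$) avoiding all forbidden patterns, and whose edges correspond to legal one-row extensions. A valid horizontal-period-$p$ tiling of $\ZZ^2$ then corresponds exactly to a bi-infinite walk in $G_p$, and since each vertex of $G_p$ can be stored in $O(p)$ bits, walks can be explored non-deterministically in $O(p) = O(n)$ space.

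The min horizontal eigenperiod of such a tiling equals the lcm of the min horizontal periods of its rows, and hence equals $p$ iff for every prime $\ell$ dividing $p$, at least one row fails to have period $p/\ell$. I would track this condition during the walk via a bit-vector $F$ indexed by prime divisors of $p$; this adds only $O(\log p)$ bits. The algorithm itself non-deterministically guesses a ``cycle--path--cycle'' structure in $G_p$: a vertex $v_-$ together with a closed walk through it, then a walk from $v_-$ to a fresh vertex $v_+$, then a closed walk through $v_+$. Repeating the first cycle infinitely in the past and the second infinitely in the future yields a bi-infinite walk in $G_p$, hence a valid tiling. Each newly visited row updates $F$ appropriately, and the machine accepts iff all bits of $F$ are set at the end. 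The space used, for $v_-$, $v_+$, the current vertex, the bit-vector $F$, and a counter bounded by $|G_p| \leq 2^{O(p)}$, is $O(p) = O(n)$.

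The main obstacle is the completeness argument: given a valid tiling with min horizontal period exactly $p$, perhaps not vertically periodic, one must still exhibit a witnessing cycle--path--cycle. Here I would use a structural decomposition of the bi-infinite walk in $G_p$: going forward (resp.\ backward) the walk eventually lies entirely within a single non-trivial strongly connected component containing a cycle. The primes witnessing min period $p$ are distributed among rows in the backward-limit SCC, the finite transient middle, and the forward-limit SCC; within each non-trivial SCC, strong connectivity lets one choose a cycle long enough to visit every needed prime-contributing row, yielding the required witness structure guessed by the algorithm.
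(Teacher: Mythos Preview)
Your argument is correct, and it in fact treats the eigenperiod condition more carefully than the paper does. The paper's proof is shorter: it observes that a tiling with horizontal period $p$ can be replaced by a vertically periodic one of height at most $|T|^{rp}$ (pigeonhole on the row-window sequence), guesses such a biperiodic tiling row by row in space $O(p)$, and keeps one boolean per $k<p$ to verify that no smaller $k$ is a horizontal period of the \emph{guessed} tiling. The subtle point the paper glosses over is that the pigeonhole reduction to vertically periodic tilings need not preserve the eigenperiod; one can write down a four-symbol tile set in which some tiling has horizontal eigenperiod $2$ while every vertically periodic tiling with horizontal period $2$ actually has horizontal period $1$. Your cycle--path--cycle search over \emph{eventually} periodic tilings, together with the prime-divisor bit-vector, is precisely the device that closes this gap: your completeness argument (which, incidentally, does not really require SCCs --- once all finitely many prime-witnessing rows lie between positions $y_{\min}$ and $y_{\max}$, plain pigeonhole below $y_{\min}$ and above $y_{\max}$ already yields the two anchor cycles) shows that such a witness always exists. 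One minor correction: the counter bound $|G_p|$ is slightly too small for the middle path, since cutting out a repeated $G_p$-vertex may drop a prime-witnessing row; bounding instead by $|G_p|\cdot 2^{\omega(p)}$ (i.e.\ doing the shortening on pairs $(v,F)$) fixes this without affecting the $O(p)$ space.
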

\begin{proof}
Let $\Lh{\tau}$ be the language recognized by $\tau$.
We will construct a non-deterministic Turing machine accepting $n$ if and only if it is a 
horizontal eigenperiod of $\tau$. The 
machine has to work in space $\mathcal O(n)$, the input being given in unary.

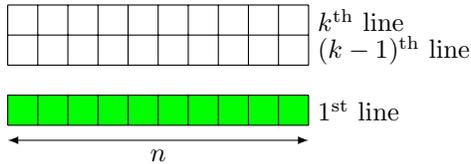
\begin{figure}[htbp]
 \begin{center}
 \begin{tikzpicture}[scale=0.4]
 \fill[color=green] (0,0) rectangle (10,1);
 \draw (0,0) grid (10,1);
 
 \draw (0,2) grid (10,4);
 \node[right] at (10,3.5) {$k^\textrm{th}$ line};
 \node[right] at (10,2.5) {$(k-1)^\textrm{th}$ line};
 \node[right] at (10,0.5) {$1^\textrm{st}$ line};
 \draw[latex-latex] (10,-0.5) -- (0,-0.5) node[midway,below]{$n$};
\end{tikzpicture}
 \caption{The tape of the machine when treating the $k^\textrm{th}$ line: we keep the firt line
in memory to verify the validity of the last one, and we guess the next line non-deterministically.}\label{mt_evol}
 \end{center}
\end{figure}

The tiling system has a finite number of finite rules, hence it has a "radius" of action $r$.
A consequence of this is that if there exists a horizontally periodic tiling 
of period $n$, then there is such a tiling of height at most $|T|^{rn}$.

This means that with a Turing machine working in space $n\times|T|^{rn}$, it is easy to verify if an integer $n$ 
is a period of the tiling $\tau$. 
We just guess non-deterministically a rectangle $n\times |T|^{rn} $ and check 
wether it is valid with copies of itself as neighbors. To check if it is an \emph{eigen}period we just have to keep in 
memory a boolean for each integer smaller than $n$, and check for each of them if they are not a possible 
period.

Actually, we only have to remember $r$ lines of the tiling at once to check the validity of some line. 
This allows us to bring the space back to $\mathcal O(n)$.
The method is the same as stated before, we check non-deterministically each line, all the preceeding
$r-1$ lines being stored to be able to check validity. We also keep the $r-1$ first lines guessed, in 
order to check the validity of the last lines. Figure~\ref{mt_evol} shows the state of the tape while 
checking the $k$-th line for $r=1$.

The machine hereby constructed works in space $\mathcal O(n)$, therefore 
$\Lh{\tau}\in\nspace{2^n}$.
\end{proof}

\begin{lemma}\label{lemmeconstr}
	
For any unary language $L \in \nspace{n}$, then
$\{n \in \mathbb{N}^\star, 1^{n-1} \in L \}\in\horiz$. 
\end{lemma}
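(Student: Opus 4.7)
The plan is to construct a tiling system $\tau$ whose valid tilings with horizontal eigenperiod $n$ correspond exactly to accepting computations of a non-deterministic Turing machine for $L$ on input $1^{n-1}$. Let $M$ be a non-deterministic Turing machine deciding the unary language $L$ in space $cn$ for some constant $c$. By compressing the tape alphabet (and by using the multi-tape extension mentioned at the end of Section~2 if needed), I may assume without loss of generality that $M$ uses exactly $n-1$ cells on input $1^{n-1}$, and, by adding a step counter of $O(n)$ bits, that $M$ always halts.

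The tiling system $\tau$ is built by adapting the Turing machine encoding of Section~2. Each row of a valid tiling represents a tape configuration of $M$, split into horizontal periods of length $n$: column $0$ (and every column $\equiv 0 \bmod n$) carries a special endmarker tile; the columns $1,\ldots,n-1$ carry the $n-1$ tape cells (with one of them additionally carrying the head state); and the tile rules force each row to encode a legal successor of the row immediately above. Since the endmarker tile appears at exactly one of every $n$ columns in each row, any horizontal period of any valid tiling must be a multiple of $n$; and since $n$ is itself a period by construction, the horizontal eigenperiod of any valid tiling is exactly $n$.

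The main obstacle is enforcing that some row encodes the initial configuration (input $1^{n-1}$, head at column $1$, state $s_0$) while still extending the tiling to all of $\ZZ^2$ in both vertical directions. I would handle both issues at once by turning acceptance into a deterministic reset loop: modify $M$ so that, upon entering an accepting state, it runs a short deterministic subroutine that walks back to the endmarker, sweeps right rewriting every tape cell as $1$, and re-enters state $s_0$ at column $1$. This preserves $L(M)$ but transforms every accepting computation into an infinite periodic cycle whose single period fits into a finite vertical strip, so any accepting computation yields a vertically periodic tiling of $\ZZ^2$. Conversely, a rejecting path halts at a configuration with no legal successor tile, and hence cannot be extended to fill $\ZZ^2$.

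Combining these observations, a valid tiling of $\tau$ with horizontal eigenperiod $n$ exists if and only if $M$ has an accepting computation on input $1^{n-1}$, that is, if and only if $1^{n-1} \in L$. Consequently $\{n \in \mathbb{N}^\star : 1^{n-1} \in L\} = \Lh{\tau} \in \horiz$, as required.
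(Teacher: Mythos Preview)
Your construction has a genuine gap in the converse direction: nothing in your tiling rules forces the endmarker column (or a head) to appear at all. Consider the tiling in which every tile is ``blank tape cell, no head, no endmarker''. Each row is a legal successor of itself, so this is a valid tiling of $\ZZ^2$ with horizontal eigenperiod~$1$, regardless of whether $\epsilon\in L$. More generally, even if you force endmarkers to occur, local rules alone cannot prevent \emph{several} endmarker columns from appearing inside a single period: take two accepting computations on inputs $1^{k-1}$ and $1^{m-1}$ with $k\neq m$, place them side by side between three endmarker columns, and repeat. This yields a valid tiling whose eigenperiod is $n=k+m$, yet you can only conclude $1^{k-1}\in L$ and $1^{m-1}\in L$, not $1^{n-1}\in L$. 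Your sentence ``since $n$ is itself a period by construction, the horizontal eigenperiod of any valid tiling is exactly $n$'' conflates the particular tiling you build in the forward direction with an arbitrary valid tiling.

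The paper's proof is engineered precisely to close these two holes. It superimposes an aperiodic E-deterministic layer (component~$A$) so that any horizontally periodic tiling is \emph{forced} to contain a distinguished ``gray'' column, since the aperiodic tiles alone admit no periodic tiling. It then adds a counter layer (component~$D$) whose horizontally propagating overflow rows force all gray columns to be equidistant, and a transmission layer (component~$T$), together with a synchronisation of non-deterministic choices, to guarantee that the spacing between gray columns equals the eigenperiod. Your reset-loop idea for extending an accepting run to a bi-infinite vertical tiling is a pleasant alternative to the paper's $c^n$-height rectangles, but it does not address the horizontal structure; without the aperiodic and synchronisation layers the equivalence $\Lh{\tau}=\{n:1^{n-1}\in L\}$ fails.
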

\begin{proof}
Let $L \in \nspace{n}$.
There exists then a non-deterministic Turing machine $M$ accepting $L$ in linear space. 
Using traditional tricks from complexity theory, we can suppose that, 
on input $1^n$ the Turing machine uses exactly $n+1$ cells of the tape
(i.e. the input, with one additional cell on the right) and works in time
$c^n$ for some constant $c$.

We will build a tiling system $\tau$ so that $1^{n} \in L$
if and only if $n+4$ is a period of the tiling $\tau$.
The modification to obtain $n+1$ rather that $n+4$, and thus prove the
lemma, is left to the reader (basically ``fatten'' the gray tiles
presented below so that they absorb 3 adjacent tiles), and is of no interest rather than technical.

The tiling system will be made of several 
components, each of them having a specific goal. The components and their rules are 
as follows:

\begin{itemize}
 \item The first component $A$ is composed of an aperiodic E-deterministic tiling system,
 whose tiles will be called "whites". We take the one from section~\ref{S:Tpc}. We add a "gray"
 tile. The rules forbid any
 pattern containing a white tile above or below a gray tile. Hence a column containing a 
 gray tile can only have gray tiles.
 
 With this construction, an aperiodic tiling of period $p$ will have gray columns at distance
 $p$ of each other. The reason for that is that
 the white tile can only tile the plane aperiodically.
 
 For the moment nothing forbids more than one gray column to appear inside a period. Figure~\ref{compoA} shows
 a possible form of a periodic tiling at this stage.

\begin{figure}[htbp]
 \begin{center}
 \begin{tikzpicture}[scale=0.3]
 \fill[color=darkgray] (0,0) rectangle (1,10);
 \draw[dotted] (0,10) -- (0,11);
 \draw[dotted] (1,10) -- (1,11);
 \draw[dotted] (0,0) -- (0,-1);
 \draw[dotted] (1,0) -- (1,-1);
 \fill[color=darkgray] (3,0) rectangle (4,10);
 \draw[dotted] (3,10) -- (3,11);
 \draw[dotted] (4,10) -- (4,11);
 \draw[dotted] (3,0) -- (3,-1);
 \draw[dotted] (4,0) -- (4,-1);
 \fill[color=darkgray] (6,0) rectangle (7,10);
 \draw[dotted] (7,10) -- (7,11);
 \draw[dotted] (8,10) -- (8,11);
 \draw[dotted] (7,0) -- (7,-1);
 \draw[dotted] (8,0) -- (8,-1);
 \fill[color=darkgray] (14,0) rectangle (15,10);
 \draw[dotted] (14,10) -- (14,11);
 \draw[dotted] (15,10) -- (15,11);
 \draw[dotted] (14,0) -- (14,-1);
 \draw[dotted] (15,0) -- (15,-1);
 \fill[color=darkgray] (17,0) rectangle (18,10);
 \draw[dotted] (17,10) -- (17,11);
 \draw[dotted] (18,10) -- (18,11);
 \draw[dotted] (17,0) -- (17,-1);
 \draw[dotted] (18,0) -- (18,-1);
 \fill[color=darkgray] (20,0) rectangle (21,10);
 \draw[dotted] (20,10) -- (20,11);
 \draw[dotted] (21,10) -- (21,11);
 \draw[dotted] (20,0) -- (20,-1);
 \draw[dotted] (21,0) -- (21,-1);
 \fill[color=darkgray] (28,0) rectangle (29,10);
 \draw[dotted] (28,10) -- (28,11);
 \draw[dotted] (29,10) -- (29,11);
 \draw[dotted] (28,0) -- (28,-1);
 \draw[dotted] (29,0) -- (29,-1);
 \draw (0,0) grid (29,10);
 \draw[latex-latex] (0,-2) -- (14,-2) node[midway,below]{$p$};
 \draw[latex-latex] (14,-2) -- (28,-2) node[midway,below]{$p$};
\end{tikzpicture}
 \caption{A periodic tiling with the tiling system $A$.}\label{compoA}
 \end{center}
\end{figure}
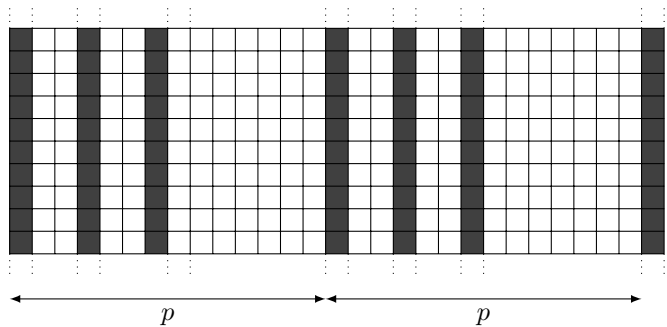

 \item The second component $D=P\times\{R,B\}$ is a tile set that will allow us to force the
 distance between gray columns to be always the same. Furthermore, if we call $n$ this distance,
 then it will also force rows at distance $c^n$ of each other. It is divided into two subcomponents
 $P$ and $\{R,B\}$, which are respectively a counter in base $c$ and a marker for the rows. Let us
 focus on the counter: it adds one on each row, which means, if the number is delimited by two gray
 columns at distance $n$, it can go from $0$ to $c^n-1$. Such a counter, can be realized with a 
 transducer corresponding to the local operation.

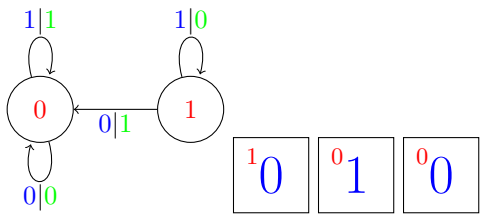
\begin{figure}[htbp]
 \begin{center}
 \begin{tikzpicture}[auto,inner sep=1]
 \node[state] 	(q_0)		{$\color{red}0$};
 \node[state]	(q_1)   at (2,0)	{$\color{red}1$};
 \path[->]	(q_0)	edge [loop above]	node 	{$\color{blue}1\color{black}|\color{green}1$}	(q_0)
 		(q_0)	edge [loop below]	node	{$\color{blue}0\color{black}|\color{green}0$}	(q_0)
		(q_1)	edge 	node	{$\color{blue}0\color{black}|\color{green}1$}	(q_0)
		(q_1)	edge [loop above]	node	{$\color{blue}1\color{black}|\color{green}0$}	(q_1);
\end{tikzpicture}
\begin{tikzpicture}[scale=0.25]
 \draw (0,0) rectangle (4,4);
 \node (valeur) at (2,2) {\color{blue}\huge 0};
 \node (retenue) at (1,3) {\color{red}\small 1};
\end{tikzpicture}
\begin{tikzpicture}[scale=0.25]
 \draw (0,0) rectangle (4,4);
 \node (valeur) at (2,2) {\color{blue}\huge 1};
 \node (retenue) at (1,3) {\color{red}\small 0};
\end{tikzpicture}
\begin{tikzpicture}[scale=0.25]
 \draw (0,0) rectangle (4,4);
 \node (valeur) at (2,2) {\color{blue}\huge 0};
 \node (retenue) at (1,3) {\color{red}\small 0};
\end{tikzpicture}
 \caption{Transducer corresponding to the operation of adding a bit and its 
 corresponding tiles. The states of the transducer correspond to the carry.}\label{transdu}
 \end{center}
\end{figure}

\begin{figure}[htbp]
 \begin{center}
 \begin{tikzpicture}[scale=0.25]
 \draw (0,0) rectangle (4,4);
 \node (valeur) at (2,2) {\color{blue}\huge 0};
\begin{scope}[shift={(4,0)}]
 \draw (0,0) rectangle (4,4);
 \node (valeur) at (2,2) {\color{blue}\huge 1};
\end{scope}
\begin{scope}[shift={(8,0)}]
 \draw (0,0) rectangle (4,4);
 \node (valeur) at (2,2) {\color{blue}\huge 1};
\end{scope}
\begin{scope}[shift={(0,4)}]
 \draw (0,0) rectangle (4,4);
 \node (valeur) at (2,2) {\color{green}\huge 1};
 \node (retenue) at (1,3) {\color{red}\small 0};
\end{scope}
\begin{scope}[shift={(4,4)}]
 \draw (0,0) rectangle (4,4);
 \node (valeur) at (2,2) {\color{green}\huge 0};
 \node (retenue) at (1,3) {\color{red}\small 1};
\end{scope}
\begin{scope}[shift={(8,4)}]
 \draw (0,0) rectangle (4,4);
 \node (valeur) at (2,2) {\color{green}\huge 0};
 \node (retenue) at (1,3) {\color{red}\small 1};
\end{scope}
\begin{scope}[shift={(12,0)},scale=4]
\fill[color=darkgray] (0,0) rectangle (1,2);
\draw (0,0) grid (1,2);
\end{scope}
\end{tikzpicture}
 \caption{Example of a valid tiling with the transducer tiles.}\label{transdu_pave}
 \end{center}
\end{figure}

 The transducer of figure~\ref{transdu} corresponds to the counter in the case $c=2$. 

 To make a set of tiles out of this, we only need tiles containing the carry and the value at the 
 same time. We forbid two tiles to be neighbors if the one on the left is not the result of 
 applying the transducer to the one on the right. The gray columns being the delimiters, if 
 a gray tile is on the right of some white tile, it is always considered as a tile containing a 
 carry. 

 The rules on the $\{R,B\}$ subcomponent are easy: a tile on the right or the left of a $R$ tile is 
 always a $R$ tile, the value inside $R$ tiles can only be $0$. A gray tile is $R$ if the tile below it
 was a $c-1$, that is to say when the number is reinitialized. 

 We made here rows of $R$ distant of $c^n$, which in conjunction with the gray columns form regular
 rectangles on the plane.
 
 Figure~\ref{compoDT}a shows some typical tiling at this stage: The aperiodic component can still be
 different between two gray columns, so the distance between these columns is not the period of the 
 tiling.

 \item Component $T$ is only a copy of $A$ which allows us to synchronize the first white columns 
 after the gray columns: synchronising these columns ensures that the aperiodic components between two
 gray columns are always the same, since the aperiodic tiles are E-deterministic. The rules are simple, 
 two horizontal neighbors have the same value on this component and a tile having a gray tile on its
 left has the same value in $A$ as in $T$.
 
 At this stage, we have regular rectangles on all the plane, whose width correspond to the period of 
 the tiling, as shown in figure~\ref{compoDT}b.
\begin{figure}[tbp]
 \begin{center}
 a)\begin{tikzpicture}[scale=0.2]
 \fill[color=darkgray] (0,0) rectangle (1,20);
 \draw[dotted] (0,20) -- (0,21);
 \draw[dotted] (1,20) -- (1,21);
 \draw[dotted] (0,0) -- (0,-1);
 \draw[dotted] (1,0) -- (1,-1);
 \fill[color=darkgray] (4,0) rectangle (5,20);
 \draw[dotted] (4,20) -- (4,21);
 \draw[dotted] (5,20) -- (5,21);
 \draw[dotted] (4,0) -- (4,-1);
 \draw[dotted] (5,0) -- (5,-1);
 \fill[color=darkgray] (8,0) rectangle (9,20);
 \draw[dotted] (8,20) -- (8,21);
 \draw[dotted] (9,20) -- (9,21);
 \draw[dotted] (8,0) -- (8,-1);
 \draw[dotted] (9,0) -- (9,-1);
 \fill[color=green] (1,0) rectangle (4,20);
 \fill[color=red] (5,0) rectangle (8,20);
 \fill[color=darkgray] (0,2) rectangle (9,3);
 \fill[color=darkgray] (0,18) rectangle (9,19);
 \draw (0,0) grid (9,20);
 \draw[latex-latex] (-1,2) -- (-1,18) node[midway,left]{$2^n$};
 \draw[latex-latex] (0,-2) -- (4,-2) node[midway,below]{$n$};
 \draw[latex-latex] (4,-2) -- (8,-2) node[midway,below]{$n$};
\end{tikzpicture}
~~~~~~
b)\begin{tikzpicture}[scale=0.2]
 \fill[color=darkgray] (0,0) rectangle (1,20);
 \draw[dotted] (0,20) -- (0,21);
 \draw[dotted] (1,20) -- (1,21);
 \draw[dotted] (0,0) -- (0,-1);
 \draw[dotted] (1,0) -- (1,-1);
 \fill[color=darkgray] (4,0) rectangle (5,20);
 \draw[dotted] (4,20) -- (4,21);
 \draw[dotted] (5,20) -- (5,21);
 \draw[dotted] (4,0) -- (4,-1);
 \draw[dotted] (5,0) -- (5,-1);
 \fill[color=darkgray] (8,0) rectangle (9,20);
 \draw[dotted] (8,20) -- (8,21);
 \draw[dotted] (9,20) -- (9,21);
 \draw[dotted] (8,0) -- (8,-1);
 \draw[dotted] (9,0) -- (9,-1);
 \fill[color=green] (1,0) rectangle (4,20);
 \fill[color=green] (5,0) rectangle (8,20);
 \fill[color=darkgray] (0,2) rectangle (9,3);
 \fill[color=darkgray] (0,18) rectangle (9,19);
 \draw (0,0) grid (9,20);
 \draw[latex-latex] (-1,2) -- (-1,18) node[midway,left]{$2^n$};
 \draw[latex-latex] (0,-2) -- (4,-2) node[midway,below]{$n$};
 \draw[latex-latex] (4,-2) -- (8,-2) node[midway,below]{$n$};
\end{tikzpicture}
 \caption{a) The form after adding the component $D$, the aperiodic components between
 two gray columns can be different. b) After adding also the component $T$, the aperiodic components 
 are exactly the same.}\label{compoDT}
 \end{center}
\end{figure}
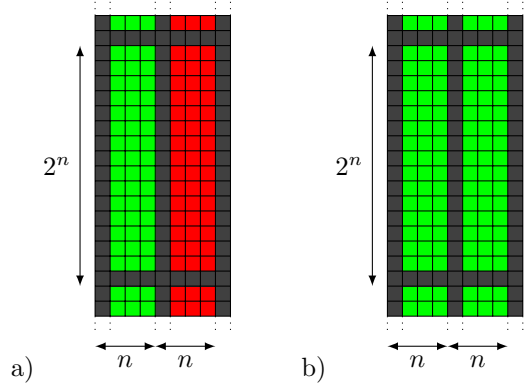
 
 \item The last component $M$ is the component allowing us to encode Turing machines in each rectangle.
 We use the encoding $\tau_M$ we described previously in section~\ref{S:Tpc}.
 We force the computation to appear inside the white tiles : 
 The white bottom borders must appear only in the row $R$, and the
 row below will have top borders. And the two tiles between the row $R$
 and the gray tiles are corner tiles.
 Finally, the input of the Turing machine
 (hence the row above the row $R$) consist of only  ``1'' symbols, with a
 final blank symbol.

 The Turing machines considered here being non-deterministic, there could be different valid transitions
 on two horizontally adjacent rectangles, that is why we synchronize the transitions on each row.
 The method for the synchronization of the transition is almost the same as the method for
 the synchronisation of the aperiodic components, and thus not provided here.

\end{itemize}

Now we prove that
$1^{n} \in L$
if and only if $n+4$ is a period of the tiling system $\tau$.
First suppose that $n+4$ is a period and consider a tiling of period $n+4$.
\begin{itemize}
\item Due to component $A$, a gray column must appear. The period is a
succession of either gray and white columns.
\item  Due to the component $D$, the gray columns are spaced by a period of $p$,
$p < n+4$.
\item  Due to component $T$ the tiling we obtain is (horizontally) $p$-periodic when restricted to
the components $D,T,A$.
\item For the $M$ component to be correctly tiled, the input $1^{p-4}$
(4 = 1 (gray) + 1 (left border) + 1 (right border) + 1 (blank marker))
must be accepted by the Turing Machine, hence $1^{p-4} \in L$
\item  Finally, due to the synchronization of the non-deterministic transition, the
$M$ component is also $p$-periodic. As a consequence, our tiling is
$p$-periodic, hence $n+4 = p$. Therefore $1^n \in L$
\end{itemize}

Conversely, suppose $1^{n} \in L$.
Consider the coloring of period $n+4$ obtained as follows (only a period is
described):
\begin{itemize}
	\item The component $A$ consist of $n+3$ correctly tiled columns of our
	  aperiodic $E$-deterministic tiling systems, with an additional gray column.
	  As the $E$-deterministic tiling system tiles the plane, such a tiling
	  is possible
	\item The component $M$ corresponds to a successful computation path of
	  the Turing machine on the input $1^n$, that exists by hypothesis.	  
	\item We then add all other layers according to the rules to obtain a
	  valid configuration, thus obtaining a valid tiling of period exactly $n+4$.
  \end{itemize}	
\end{proof}

\begin{corollary}\label{corhorizspace}
The languages recognized horizontally by tiling are closed under intersection and complementation.
\end{corollary}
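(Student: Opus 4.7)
The plan is to derive the corollary directly from Theorem~\ref{horizspace}, which converts the combinatorial question about tiling systems into a structural complexity question about $\nspace{n}$. By the equivalence $(i)\Leftrightarrow(ii)$, a language $\mathcal{L}$ lies in $\horiz$ if and only if $\unaire{\mathcal{L}}\in\nspace{n}$, so it suffices to show that the collection $\{\mathcal{L} : \unaire{\mathcal{L}} \in \nspace{n}\}$ is closed under the two operations.

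For intersection, I would first observe that unary encoding commutes with intersection: $\unaire{\mathcal{L}_1\cap\mathcal{L}_2} = \unaire{\mathcal{L}_1}\cap\unaire{\mathcal{L}_2}$. Given non-deterministic linear-space Turing machines $M_1$ and $M_2$ recognizing $\unaire{\mathcal{L}_1}$ and $\unaire{\mathcal{L}_2}$, a standard construction builds a machine that, on input $1^{n-1}$, runs $M_1$ and then $M_2$ sequentially, reusing the work tape between the two phases. This stays within space $\mathcal{O}(n)$, so $\unaire{\mathcal{L}_1\cap\mathcal{L}_2}\in\nspace{n}$ and hence $\mathcal{L}_1\cap\mathcal{L}_2\in\horiz$.

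For complementation, the key ingredient is the Immerman--Szelepcs\'enyi theorem, which states that $\nspace{f(n)}$ is closed under complement whenever $f(n)\geq \log n$. In particular $\nspace{n}$ is closed under complement. Writing $\mathcal{L}^c = \NN^\star \setminus \mathcal{L}$, one checks that $\unaire{\mathcal{L}^c} = \{1\}^\star \setminus \unaire{\mathcal{L}}$, which lies in $\nspace{n}$ by Immerman--Szelepcs\'enyi. Applying Theorem~\ref{horizspace} in the reverse direction yields $\mathcal{L}^c\in\horiz$.

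There is no genuine obstacle here: all the work has been absorbed into Theorem~\ref{horizspace} and its two lemmas. The only thing to be slightly careful about is the trivial bookkeeping that unary encoding respects Boolean operations, and the invocation of Immerman--Szelepcs\'enyi, which is the only non-elementary fact used. (In contrast, for $\totper$ the analogous argument for complementation will fail precisely because the corresponding class is $\cne$ rather than $\nspace{n}$, and closure of $\cne$ under complement is an open problem -- this explains the conditional formulation in Theorem~\ref{totperne}.)
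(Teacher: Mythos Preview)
Your proposal is correct and follows exactly the paper's own argument: reduce via Theorem~\ref{horizspace} to closure properties of $\nspace{n}$, then invoke Immerman--Szelepcs\'enyi for complementation (and the standard sequential-simulation argument for intersection). The only difference is that you spell out the bookkeeping about $\unaire{\cdot}$ commuting with Boolean operations, which the paper leaves implicit.
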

\begin{proof}
Immerman-Szelepczenyi's theorem \cite{Immerman88,BDG2} states that non-deterministic space complexity
classes are closed by complementation. The result is then a consequence of theorem~\ref{horizspace}.
\end{proof}
Note that if $\horiz$ is closed under
complementation, then $\nspace{2^n}$ is closed under complementation by theorem \ref{horizspace}. As a
consequence, any proof of corollary~\label{corhorizspace} will translate
into a proof that space complexity classes (greater than $2^n$) are closed
under complementation, hence give an alternate proof of
Immerman-Szelepczenyi's result.

This theorem could be generalized to tilings of dimension $d$ by considering tilings having a period $n$
on $d-1$ dimensions, as explained in \cite{Borchert}.


\section{$\totper$ and $\cne$}
We now proceed to total periods rather than horizontal periods.
We will prove:
\begin{theorem}\label{totpercne}
Let $\mathcal L \subset \mathbb{N}^\star$ be a language, the following statements are equivalent:
\begin{enumerate}[i)]
 \item there exists a dimension $d$ for which $\mathcal L\in\totper$
 \item $\unaire{\mathcal L}\in\cnp$
 \item $\binaire{\mathcal L}\in\cne$
\end{enumerate}
\end{theorem}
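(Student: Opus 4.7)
The plan is to establish the cycle \((i)\Rightarrow(ii)\Rightarrow(iii)\Rightarrow(i)\). The equivalence \((ii)\Leftrightarrow(iii)\) is folklore: a unary language lies in \(\cnp\) exactly when its binary version lies in \(\cne\), since an input of binary length \(m\) corresponds to a unary input of length roughly \(2^m\), so polynomial time in the unary input size is exponential time in the binary input size, and conversely.

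For \((i)\Rightarrow(ii)\), I would use that a totally \(n\)-periodic tiling in dimension \(d\) is entirely determined by its restriction to a fundamental cube of side \(n\), i.e.\ by only \(n^d\) tiles. A nondeterministic machine on unary input \(1^n\) guesses such a cube, checks that no forbidden pattern of \(\tau\) appears anywhere in the \(d\)-torus obtained by identifying opposite faces of the cube (a local check of size polynomial in \(n^d\), which is polynomial in \(n\) since \(d\) is a constant depending on \(\tau\)), and then rules out every candidate smaller period \(p < n\) along each of the \(d\) axes by direct comparison. All these checks run in \(\cnp\), which proves \(\unaire{\Lt{\tau}} \in \cnp\).

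The heart of the argument is \((ii)\Rightarrow(i)\), which I would prove by a multidimensional adaptation of Lemma~\ref{lemmeconstr}. Given \(M\) accepting \(\unaire{\mathcal L}\) in nondeterministic time \(n^k\), I would pick a dimension \(d\) large enough that a cube of side \(n+c\) contains at least \(n^k \cdot n^k\) cells (so \(d = 2k+1\) is safe) and build a tiling system with the same four components as in Lemma~\ref{lemmeconstr}: an aperiodic \(E\)-deterministic layer \(A\) creating regularly spaced gray hyperplanes orthogonal to a distinguished axis; a counter layer \(D\) forcing the distance between consecutive gray hyperplanes to be constant and producing a base-\(c\) counter that marks one hyperplane ``row \(R\)'' out of every \(c^{n}\); a synchronization layer \(T\) forcing the aperiodic layer to be identical between successive gray hyperplanes; and a computation layer encoding an accepting tableau of \(M\) on \(1^{n}\). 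The new ingredient is that the \(n^k\)-step, \(n^k\)-cell Turing tableau no longer fits into a single row but is instead folded across the \(d-1\) transverse dimensions of the cube, with half of them indexing time and half indexing tape position. Once the tableau is in place and the nondeterministic transitions are synchronized transversally as in Lemma~\ref{lemmeconstr}, the fundamental cube is forced to be repeated periodically in every direction, and the eigenperiod is again inherited from \(A\): any strictly smaller period would make the aperiodic layer periodic, a contradiction. Conversely, any accepting run of \(M\) on \(1^n\) gives an explicit tiling of the \(d\)-torus of side \(n+c\), so \(n+c\) is an eigenperiod exactly when \(1^n \in \unaire{\mathcal L}\).

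The main obstacle I expect is the shift from a single unbounded ``vertical'' direction, which in Lemma~\ref{lemmeconstr} offered \(c^n\) free rows and thus linear space, to a setting where every direction is bounded by the period. The scaffolding layers \(A\), \(D\), \(T\) must now themselves close up cyclically in every direction and not merely in the horizontal one; I would handle this by giving each auxiliary component an explicit coordinate modulo the period along every transverse axis, enforced locally by a copy of the counter layer, and by arranging the Turing tableau so that the last time slice matches the first modulo the periodic identification. Once these geometric consistency conditions are built into the local rules, the rest of the argument parallels Lemma~\ref{lemmeconstr} line by line, and Theorem~\ref{totperne}, in particular the intersection statement, follows by the same cartesian-product plus coordinate-separation trick applied in dimension \(d_1 + d_2\).
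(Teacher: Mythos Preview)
Your treatment of $(i)\Rightarrow(ii)$ and of $(ii)\Leftrightarrow(iii)$ is fine and matches the paper. The gap is in $(ii)\Rightarrow(i)$, where you carry over too much of Lemma~\ref{lemmeconstr} unchanged.

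The specific problem is the layer $D$ ``producing a base-$c$ counter that marks one hyperplane $R$ out of every $c^n$''. In the horizontal setting this was harmless because only one direction was required to have period $n$; the orthogonal direction was unconstrained and could accommodate $c^n$ rows between successive $R$'s. Under \emph{total} periodicity of period $n$, every direction is $n$-periodic, so no layer can carry a counter of length $c^n$: that layer alone would force the period along the counting direction to be a multiple of $c^n$, not $n$. You flag this obstacle in your last paragraph, but the fix you sketch (``an explicit coordinate modulo the period along every transverse axis'') does not remove the $c^n$ counter from the construction; and if you do remove it, you have described no replacement mechanism to delimit the computation region.

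The paper's resolution is to discard the counter entirely. It introduces gray tiles in \emph{every} direction (horizontal, vertical, and crossing tiles, in the two-dimensional description) and replaces the counter by a diagonal signal emitted from each crossing and required to land on another crossing. This forces the fundamental domains to be $n\times n$ squares rather than $n\times c^n$ rectangles. Correspondingly, the aperiodic layer is taken NW-deterministic rather than E-deterministic, so that a square is determined by its first row \emph{and} first column, and the synchronization layer $T$ propagates both. This already realises $\ntime{n}$ in dimension $2$; for $\ntime{n^c}$ one passes to dimension $2c$ and folds the tableau exactly as you propose. So your folding idea is the right one, but it must sit on top of the hypercubical scaffolding, not the rectangular one.

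A related consequence of keeping only one family of gray hyperplanes and E-determinism is that nothing in your construction pins down the period along the transverse axes: the aperiodicity argument you invoke (``any strictly smaller period would make the aperiodic layer periodic'') only controls the distinguished axis. The paper's crossing-tile grid together with NW-determinism is precisely what ties all directions to the same period.
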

We will obtain as a corollary Theorem~\ref{totperne}. Note the slight
difference in formulation between Theorem~\ref{totperne} and 
Theorem \ref{horizspace}.
While we can encode a Turing machine working in time $n$ in a tiling of size $n^2$, we cannot check the validity of the tiling in less than $\mathcal O(n^2)$ time steps. And furthermore, we cannot, as far as we know,
check if $n$ is an eigenperiod in time less than $\mathcal O(n^3)$.
More generally, it is unclear that we
can encode a Turing machine working in time $n^c$ in a tiling of size less than $n^{2c}$. So instead
of considering the time class $\ntime{n}$, we consider the class $\cnp$ on unary inputs. This will
allow us to cover the gap.
The class $\cnp$ for unary inputs corresponds to the class $\cne$ for
binary inputs.

\begin{proof}

The statements $(i) \Rightarrow (ii) \Leftrightarrow (iii)$ were already explained. So we only 
have to prove $(ii) \Rightarrow (i)$.

In order to get a periodic structure from the construction of lemma~\ref{lemmeconstr}, we only need 
to slightly modify it. We will detail the modification for $d=2$,
the generalization being straightforward:
\begin{itemize} 
 \item in component A, we start from an aperiodic NW-deterministic tiling system. We 
   will consider 3 types of gray tiles: a crossing tile, a horizontal tile and a vertical tile. The rules are simple:
  \begin{itemize}
   \item above or below a vertical tile one can only have a crossing or a vertical tile, 
   \item on the left or on the right of a horizontal tile, there can only be a crossing or a horizontal tile,
   \item on the left and on the right of a crossing tile, there can only be a crossing or a horizontal tile,
   \item above or below a crossing tile, there can only be a crossing or a vertical tile.
  \end{itemize}
  With these rules, when a row and a column of gray tiles cross, the crossing can only be a crossing tile.
 \item In component $D$, instead of using a counter, we only use a signal propagating in diagonal from 
 each crossing tile, which can only end on a crossing tile. Instead of having rectangles of size $c^n\times n$
 we now have squares of size $n\times n$.
 At this step in the proof, the reasoning is a follows: Take a periodic
 tiling, then either a horizontal or a vertical tile must appear, as the
 aperioc tiling system cannot tile the plane periodically. Then by component $D$,
 an horizontal tile will force a vertical tile to appear, and vice versa, so
 that every periodic tilings is composed of squares.
 \item The component $T$ will synchronize the aperiodic component between
   all squares, by propagating the first line and first column of a square 
   to all the neighbors. Indeed, a square of a NW-deterministic tiling system is
   entirely fixed once we know the first line and the first column.
   This is easily done by 
 synchronising as shown in figure~\ref{transmiamelioree}.
 
\begin{figure}[htbp]
 \begin{center}
 a)\scalebox{0.7}{
\begin{tikzpicture}[scale=0.4]
 \fill[color=gray] (0,0) rectangle (1,12);
 \fill[color=gray] (6,0) rectangle (7,12);
 \fill[color=gray] (0,0) rectangle (12,1);
 \fill[color=gray] (0,6) rectangle (12,7);
 \fill[color=gray] (0,12) rectangle (13,13);
 \fill[color=gray] (12,0) rectangle (13,13);
 \fill[color=green] (7,1) rectangle (12,2);
 \fill[color=green] (7,7) rectangle (12,8);
 \fill[color=green] (1,7) rectangle (6,8);
 \draw (0,0) grid (13,13);
 \draw[latex-latex] (7.5,1.5) -- (1.5,7.5);
 \draw[latex-latex] (8.5,1.5) -- (2.5,7.5);
 \draw[latex-latex] (9.5,1.5) -- (3.5,7.5);
 \draw[latex-latex] (10.5,1.5) -- (4.5,7.5);
 \draw[latex-latex] (11.5,1.5) -- (5.5,7.5);
 
 \draw[latex-latex] (7.5,1.5) -- (7.5,7.5);
 \draw[latex-latex] (8.5,1.5) -- (8.5,7.5);
 \draw[latex-latex] (9.5,1.5) -- (9.5,7.5);
 \draw[latex-latex] (10.5,1.5) -- (10.5,7.5);
 \draw[latex-latex] (11.5,1.5) -- (11.5,7.5);
 \draw[latex-latex] (-1,0) -- (-1,6) node[midway,left]{$p$};
 \draw[latex-latex] (-1,6) -- (-1,12) node[midway,left]{$p$};
 \draw[latex-latex] (0,-1) -- (6,-1) node[midway,below]{$p$};
 \draw[latex-latex] (6,-1) -- (12,-1) node[midway,below]{$p$};
\end{tikzpicture}
}
~~~~
b)\scalebox{0.7}{
\begin{tikzpicture}[scale=0.4]
 \fill[color=gray] (0,0) rectangle (1,12);
 \fill[color=gray] (6,0) rectangle (7,12);
 \fill[color=gray] (0,0) rectangle (12,1);
 \fill[color=gray] (0,6) rectangle (12,7);
 \fill[color=gray] (0,12) rectangle (13,13);
 \fill[color=gray] (12,0) rectangle (13,13);
 \fill[color=green] (1,1) rectangle (2,6);
 \fill[color=green] (7,7) rectangle (8,12);
 \fill[color=green] (7,1) rectangle (8,6);
 \draw (0,0) grid (13,13);

 \draw[latex-latex] (1.5,1.5) -- (7.5,1.5);
 \draw[latex-latex] (1.5,2.5) -- (7.5,2.5);
 \draw[latex-latex] (1.5,3.5) -- (7.5,3.5);
 \draw[latex-latex] (1.5,4.5) -- (7.5,4.5);
 \draw[latex-latex] (1.5,5.5) -- (7.5,5.5);
 
 \draw[latex-latex] (1.5,1.5) -- (7.5,7.5);
 \draw[latex-latex] (1.5,2.5) -- (7.5,8.5);
 \draw[latex-latex] (1.5,3.5) -- (7.5,9.5);
 \draw[latex-latex] (1.5,4.5) -- (7.5,10.5);
 \draw[latex-latex] (1.5,5.5) -- (7.5,11.5);
 
 \draw[latex-latex] (-1,0) -- (-1,6) node[midway,left]{$p$};
 \draw[latex-latex] (-1,6) -- (-1,12) node[midway,left]{$p$};
 \draw[latex-latex] (0,-1) -- (6,-1) node[midway,below]{$p$};
 \draw[latex-latex] (6,-1) -- (12,-1) node[midway,below]{$p$};
\end{tikzpicture}
}
 \caption{Transmission of the first row in a) and of the first line in b).}\label{transmiamelioree}
 \end{center}
\end{figure}
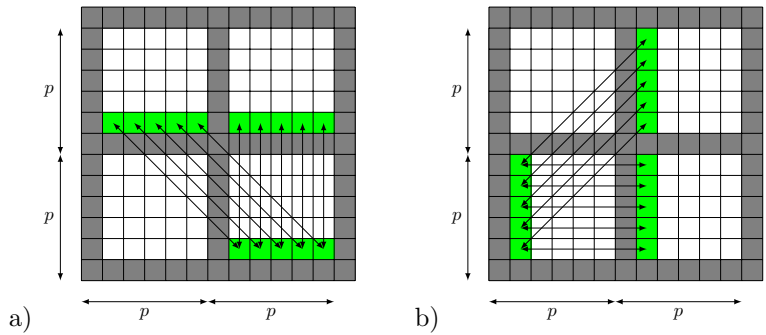

\item The Turing Machine component is the same as before.
\end{itemize}
Now this proves that every unary langage $L \in \ntime{n}$ is the set of
periods of a tiling system.

To prove the result for any unary language $L \in \cnp$, we will need to
work in higher dimensions : we will encode a machine working
in time $n^c$ in a tiling with $2c$ dimensions  ($c$ dimensions for the space and $c$ dimensions for the
time). The difficulty when generalising the dimension is not to keep the "hypercubical" structure,
as the previously described structure can straightforwardly be transposed in any dimension $d$,
but to keep the constraints of the Turing machine's tiles  local. That is to say, two tape positions $i$
and $i+1$ have to be neighbors. The same goes for time, a content of a certain tape position at time $t$ and its content at time $t+1$ have to be neighbors.

Therefore, we will bend time and space... Actually, we will \emph{fold} it as shown for dimension 3 in
figure~\ref{repliage}. 

\begin{figure}[htbp]
 \begin{center}
 \begin{tikzpicture}[scale=.4]
		
    \begin{scope}[
            xshift=250,yshift=0,yslant=-0.5,xslant=0
            ]
        \fill[white,fill opacity=0.9] (0,0) rectangle (5,5);
        \draw[fill=gray] (0,0) rectangle (7,1) rectangle (6,7) rectangle (0,6) rectangle (1,1);
        \draw[black] (0,0) grid (7,7); 
        \draw[black,very thick] (0,0) rectangle (7,7);
        \draw[-latex,thick] (1.5,1.5) node (begin3) {} -- (5.5,1.5) -- (5.5,2.5) -- (1.5,2.5) -- (1.5,3.5) -- (5.5,3.5) -- (5.5,4.5) -- (1.5,4.5) -- (1.5,5.5) -- (5.5,5.5) node (end3) {};
        \end{scope}    
    \begin{scope}[
            xshift=0,yshift=0,yslant=-0.5,xslant=0
            ]
        \fill[white,fill opacity=0.9] (0,0) rectangle (5,5);
        \draw[fill=gray] (0,0) rectangle (7,1) rectangle (6,7) rectangle (0,6) rectangle (1,1);
        \draw[black] (0,0) grid (7,7); 
        \draw[black,very thick] (0,0) rectangle (7,7);
        \draw[-latex,thick,x=-1cm,xshift=7cm,y=-1cm,yshift=7cm] (1.5,1.5) node (begin2) {} -- (5.5,1.5) -- (5.5,2.5) -- (1.5,2.5) -- (1.5,3.5) -- (5.5,3.5) -- (5.5,4.5) -- (1.5,4.5) -- (1.5,5.5) -- (5.5,5.5) node (end2) {};
         \end{scope}    
    \begin{scope}[
            xshift=-250,yshift=0,yslant=-0.5,xslant=0
            ]
        \fill[white,fill opacity=0.9] (0,0) rectangle (5,5);
        \draw[fill=gray] (0,0) rectangle (7,1) rectangle (6,7) rectangle (0,6) rectangle (1,1);
        \draw[black] (0,0) grid (7,7); 
        \draw[black,very thick] (0,0) rectangle (7,7);
        \draw[-latex,thick] (1.5,1.5) -- (5.5,1.5) -- (5.5,2.5) -- (1.5,2.5) -- (1.5,3.5) -- (5.5,3.5) -- (5.5,4.5) -- (1.5,4.5) -- (1.5,5.5) -- (5.5,5.5) node (end1) {};
     \end{scope}    
     \draw[-latex] (end1) -- (begin2);
      \draw[-latex] (end2) -- (begin3);
   \end{tikzpicture}
 \caption{Folding of the tape in dimension 3.}\label{repliage}
 \end{center}
\end{figure}
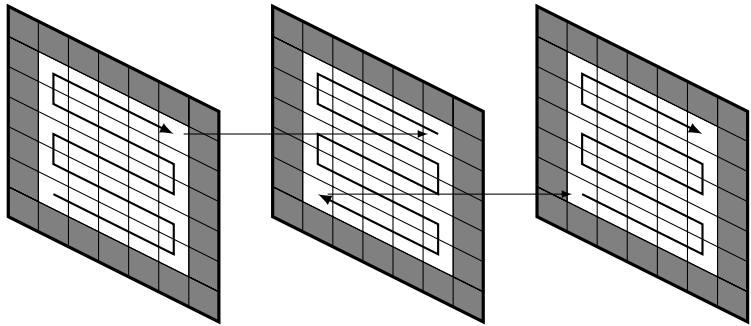

Such a folding has already been described by Borchert
\cite{Borchert} and can also be deduced from \cite{JonesSelman}.
We do not detail the construction which is technical and rather tedious.
The construction we obtain here allows us to code the Turing machine in a
hypercube of dimension $2c$, hence proving the result.
\end{proof}
\section*{Conclusion}
We obtain in this paper a complete characterization of the sets of periods
one can accomplish with tilings. The result is based on structural
complexity theory, and is very different from the sets of periods we can
obtain for $1$-dimensional tilings (subshifts of finite type, see
\cite{LindMarcus}) or for continuous discrete dynamical systems on the real
line (Sharkovskii theorem, see e.g. \cite{BGMS}).

The next step is to characterize the \emph{number} of tilings of period $p$
for a given tiling system $\tau$. Not surprisingly, preliminary work
suggests we can characterize these functions via the class $\# P$.

\bibliographystyle{plain}
\bibliography{biblio}
\end{document}